\newtheorem{theorem}{Theorem}
\newtheorem{lemma}[theorem]{Lemma}
\newtheorem{corollary}[theorem]{Corollary}
\newcommand{\cA}{\mathcal{A}}
\newcommand{\eps}{\epsilon}
\newcommand{\N}{\mathbb{N}}
\renewcommand{\phi}{\varphi}
\renewcommand{\setminus}{\smallsetminus}
\newcommand{\tu}{\tilde{u}}
\newcommand{\cC}{\mathcal{C}}
\newcommand{\argmax}{\mathrm{argmax}}
\newcommand{\cB}{\mathcal{B}}
\newcommand{\cV}{\mathcal{V}}
\newcommand{\Gin}{G^{\mathrm{inc}}}
\newcommand{\Ein}{E^{\mathrm{inc}}}
\newcommand{\opt}{\mathrm{OPT}}
\newcommand{\val}{\mathrm{val}}
\newcommand{\cH}{\mathcal{H}}
\newcommand{\negg}{\mathrm{neg}}
\newcommand{\tcV}{\widetilde{\cV}}
\newcommand{\ifin}{i_{\mathrm{fin}}}
\newcommand{\cCdel}{\cC_{\mathrm{del}}}
\newcommand{\cCdist}{\cC_{\mathrm{set}}}
\newcommand{\UNSAT}{\mathrm{UNSAT}}
\newcommand{\SAT}{\mathrm{SAT}}
\newcommand{\topt}{\widetilde{\opt}}
\newcommand{\npdeg}{\mathrm{nndeg}}
\newcommand{\Iden}{\mathrm{Iden}}
\title{Improved FPT Approximation Scheme and Approximate Kernel for Biclique-Free Max $k$-Weight SAT: Greedy Strikes Back}
\author{{Pasin Manurangsi}\\ Google Research, Thailand\\ \texttt{pasin@google.com}}
\date{\today}
\begin{document}

\maketitle

\begin{abstract}
In the \emph{Max $k$-Weight SAT} (aka \emph{Max SAT with Cardinality Constraint}) problem, we are given a CNF formula with $n$ variables and $m$ clauses together with a positive integer $k$. The goal is to find an assignment where at most $k$ variables are set to one that satisfies as many constraints as possible. Recently, Jain et al.~\cite{JKPSSSU23} gave an FPT approximation scheme (FPT-AS) with running time $2^{O\left(\left(dk/\eps\right)^d\right)} \cdot (n + m)^{O(1)}$ for Max $k$-Weight SAT when the incidence graph is $K_{d,d}$-free. They asked whether a polynomial-size approximate kernel exists. In this work, we answer this question positively by giving an $(1 - \eps)$-approximate kernel with $\left(\frac{d k}{\eps}\right)^{O(d)}$ variables. This also implies an improved FPT-AS with running time $(dk/\eps)^{O(dk)} \cdot (n + m)^{O(1)}$. Our approximate kernel is based mainly on a couple of greedy strategies together with a sunflower lemma-style reduction rule.
\end{abstract}

\section{Introduction}
In the \emph{Max $k$-Weight SAT} problem (aka the \emph{Max SAT with Cardinality Constraint} problem), we are given a CNF formula $\Phi= (\cV, \cC)$, where $\cV$ is the set of $n$ variables and $\cC$ denotes the multiset\footnote{Multiset is more convenient for our algorithms. We provide a discussion on multiset-vs-set in \Cref{sec:open}.} of $m$ clauses. The weight of an assignment is the number of variables set to true. The goal here is to output an assignment of weight at most $k$ that satisfies the maximum number of constraints. 

Max $k$-Weight SAT and its many special cases have long been studied in the approximation algorithm literature (e.g.~\cite{Feige98,AgeevS99,AgeevS04,FeigeL01,Sviridenko01,BlaserM02,Hofmeister03,RaghavendraT12,AustrinBG16,Man19,ZadehBGNSS22}). Sviridenko~\cite{Sviridenko01} gave a polynomial-time $\left(1 - \frac{1}{e}\right)$-approximation algorithm for the problem. Since Feige~\cite{Feige98} had earlier proved that $\left(1 - \frac{1}{e} + o(1)\right)$-approximation is NP-hard, this settles the polynomial-time approximability of the problem. In fact, Feige proved the hardness of approximation even for the special case where the formula is monotone (i.e. all literals are positive), which is often referred to as the \emph{Max $k$-Coverage} problem. The simple greedy algorithm for Max $k$-Coverage, which also yields the tight $\left(1 - \frac{1}{e}\right)$-approximation, have been known since the 70's~\cite{NemhauserWF78}. Special cases of Max $k$-Coverage are also studied. For example, when we assume that each clause contains $p$ literals, this corresponds to the so-called \emph{Max $k$-Vertex Cover in $p$-Uniform Hypergraph} which has been studied in \cite{AgeevS99,AgeevS04,FeigeL01,Man19,AustrinS19}. Even this special case remains an active area of research to this day; recently, \cite{AustrinS19} provides tight approximation ratio for the $p = 2$ case (which we will refer to as \emph{Max $k$-Vertex Cover}) but the case $p > 2$ remains open.

Max $k$-Weight SAT has also been extensively studied from the perspective of parameterized complexity. Here $k$ is the parameter, and we wish to find a fixed-parameter tractable (FPT)\footnote{For more background on FPT, see~\cite{CFKLMPPS15}.} algorithms, i.e. one that runs in time $f(k) \cdot n^{O(1)}$ for some computable function $f$. In the seminal work of Downey and Fellows~\cite{DowneyF95}, it is already shown that the decision versions\footnote{The goal here is to decide whether all clauses can be satisfied.} of Max $k$-Weight SAT and Max $k$-Coverage are complete for the class W[2], ruling out the existence of exact FPT algorithms for these problems. Later works show that even achieving $\left(1 - \frac{1}{e} - \eps\right)$-approximation (for any constant $\eps > 0$) in FPT time is impossible assuming Gap-ETH\footnote{Gap Exponential Time Hypothesis (Gap-ETH)~\cite{Dinur16,ManurangsiR17} states that there is no $2^{o(n)}$-time algorithm that distinguish between a satisfiable 3-CNF formula and one which is not even $(1 - \eps)$-satisfiable for some constant $\eps > 0$.}~\cite{CGKLL19,Man20}. Given such strong lower bounds, positive results for this problem have focused on special cases. The first positive result of this kind is due to Marx~\cite{Marx08} who obtained the first FPT approximation scheme (FPT-AS) for Max $k$-Vertex Cover, one that achieves $(1 - \eps)$-approximation in time  $(k/\eps)^{O(k^3/\eps)} \cdot (n + m)^{O(1)}$ for any $\eps > 0$. The runnning time was later improved to $(1/\eps)^{O(k)} \cdot (n + m)^{O(1)}$ by two independent works~\cite{Man19,SkowronF17}. In fact, Skowron and Faliszewski~\cite{SkowronF17} showed that this technique even works for $p$-uniform hypergraph for any constant $p$. Moreover, \cite{Man19} noted that the running time of this FPT-AS is essentially tight: any $(1/\eps)^{o(k)}$-time FPT-AS would break ETH\footnote{Exponential Time Hypothesis (ETH)~\cite{IP01,IPZ01} states that there is no $2^{o(n)}$-time algorithm that solves 3SAT.}.

A very recent work of Jain et al.~\cite{JKPSSSU23} observes that these algorithms rely on certain ``sparsity'' structures of the \emph{incidence graph} of $\Phi$. Recall that the \emph{incidence graph} (aka \emph{clause-variable graph}) of $\Phi$, denoted by $\Gin_{\Phi}$, is the bipartite graph $\Gin_{\Phi} = (\cV, \cC, \Ein_{\Phi})$ such $(v, C) \in \Ein_{\Phi}$ iff $v \in C$ or $\neg v \in C$. For a graph class $\cH$, the \emph{$\cH$ Max $k$-Weight SAT} problem is the problem when we restrict to only instances where $\Gin$ belongs to $\cH$. Note that Max $k$-Vertex Cover on $p$-Uniform Hypergraph belongs to $K_{p+1, 1}$-free Max $k$-Weight SAT, where $K_{a, b}$ denote the complete bipartite graph with $a$ left vertices and $b$ right vertices. With this in mind, Jain et al.~\cite{JKPSSSU23} significantly extends the aforementioned algorithms~\cite{Marx08,Man19,SkowronF17} by giving a FPT-AS for $K_{d, d}$-free Max $k$-Weight SAT with running time $2^{O\left(\left(dk/\eps\right)^d\right)} \cdot (n + m)^{O(1)}$ for any $d \in \N, \eps > 0$. Given that many sparse graph classes are $K_{d,d}$-free for some $d$, this immediately yields FPT-AS for Max $k$-Weight SAT for these graph classes (including bounded treewidth and bounded genus graphs) too\footnote{See Figure 1 of \cite{JKPSSSU23} for more graph classes that are $K_{d,d}$-free.}. Despite the generality of this result, there are still a few remaining open questions. First, is the $(k/\eps)^d$ dependenecy in the exponent of the running time necessary? Second, their technique does not yield a polynomial-size \emph{approximate kernel}, as we will discuss more below.

Kernelization is a central concept in FPT (see e.g.~\cite{kernelization-book}). In the context of FPT approximation algorithms, \cite{LPRS17} define \emph{approximate kernel} as follows. First, we define \emph{$\alpha$-approximate polynomial-time pre-processing algorithm ($\alpha$-APPA)} for a parameterized optimization problem $\Pi$ as a pair of polynomial-time algorithms $\cA, \cB$, called the \emph{reduction algorithm} and the \emph{solution-lifting algorithm} respectively, such that the following holds: (i) Given any instance $(I, k)$ of $\Pi$, $\cA$ outputs an instance $(I', k')$ of $\Pi$, and (ii) given any $\beta$-approximate solution of $(I', k')$, $\cB$ outputs an $\alpha\beta$-approximate solution of $(I, k)$. An \emph{$\alpha$-approximate kernel} is an $\alpha$-APPA such that the output size $|I'| + k'$ is bounded by some computable function of $k$. A fundamental theorem in \cite{LPRS17} is that there is an $\alpha$-approximation FPT algorithm for problem $\Pi$ if and only if it admits an $\alpha$-approximate kernel. For Max $k$-Vertex Cover, \cite{Man19,SkowronF17} actually gave very simple kernel based on greedy strategies: keep only the $O(k/\eps)$ highest degree vertices! Despite this, Jain et al.'s algorithm~\cite{JKPSSSU23} does not yield any explicit kernel. Applying the generic equivalence only gives an approximate kernel whose size is exponential in $k$. As such, they posed an open question whether we can get $k^{O(1)}$-size $(1 - \eps)$-approximate kernel for $K_{d, d}$-free Max $k$-Weight SAT. They highlighted that this is open even for $K_{d, d}$-free Max $k$-Coverage and that this ``seems difficult''.

\subsection{Our Contributions}

Our main contribution is a positive answer to their question: We design an $(1 - \eps)$-approximate kernel for $K_{d, d}$-free Max $k$-Weight SAT whose size is polynomial in $k$. Since our argument is quite flexible, we state the bound below even for $K_{a, b}$-free where $a, b$ may not be equal.

\begin{theorem} \label{thm:main-kernel}
For any $a, b \in \N$ and $\eps \in (0, 1/2)$, there is a parameter-preserving\footnote{We say that an approximate kernel is \emph{parameter-preserving} if we have $k' = k$.} $(1 - \eps)$-approximate kernel for $K_{a,b}$-free Max $k$-Weight SAT with $O\left(\frac{k \log k}{\eps}\right) + a \cdot O(b)^{2b} \cdot \frac{k^b}{\eps^{3b}}$ variables and $(k/\eps)^{O(ab)}$ clauses.
\end{theorem}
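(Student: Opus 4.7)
The plan is to construct the approximate kernel in two phases. The first is a greedy-style reduction that selects an $O(k \log k / \eps)$-size set of ``important'' variables $V_g$; the second is a sunflower-lemma-based compression that uses $K_{a,b}$-freeness to bound the remaining clauses and variables polynomially in $k/\eps$. The solution-lifting algorithm will be the identity map, extending assignments on the kept variables with ``false'' on removed ones.

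\paragraph{Phase 1: greedy variable selection.}
I would first run a polynomial-time constant-factor approximation (e.g., Sviridenko's $(1 - 1/e)$-approximation~\cite{Sviridenko01}) to obtain a value $L$ with $L \le \opt \le O(L)$, which is used to calibrate thresholds. I would also preprocess by removing every clause with more than $k$ negative literals (these are automatically satisfied under the cardinality constraint), which makes the objective essentially monotone in the set of ``true'' variables. Then I would run a greedy procedure that repeatedly adds the variable maximizing the marginal increase in satisfied clauses, for $T = O(k \log k / \eps)$ rounds, forming $V_g$. A standard diminishing-returns analysis should give that for any size-$k$ solution $S^*$, $\mathrm{val}(S^* \cap V_g) \ge \mathrm{val}(S^*) - (\eps/2)\,\opt$, so restricting the kernel's allowed support to $V_g$ costs only an $(\eps/2)$-fraction of $\opt$.

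\paragraph{Phase 2: sunflower-based reduction.}
I would then group clauses by their \emph{signature} on $V_g$ (which $V_g$-variables appear, and with what polarity). Within each signature class, the portions of the clauses lying outside $V_g$ form a family of sets; $K_{a,b}$-freeness bounds how often the same small subset of variables can occur across clauses. Applying the Erd\H{o}s--Rado sunflower lemma, if a class exceeds roughly $(k/\eps)^{O(b)}$ clauses then it contains a sunflower with at least $k/\eps$ petals and some core $Y$. In any size-$k$ assignment, at most $k$ petals can be ``hit'' by a selected variable, so all but $k$ clauses in the sunflower are determined solely by the polarity pattern of $Y$; this lets me replace the sunflower by a bounded collection of representative clauses (the core together with a handful of petal witnesses), incurring total loss $O(\eps)\cdot\opt$. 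Exhaustive application leaves $(k/\eps)^{O(ab)}$ clauses and, counting variables appearing in surviving cores and petals, at most $a \cdot O(b)^{2b} \cdot k^b / \eps^{3b}$ additional variables outside $V_g$, matching the claimed bounds.

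\paragraph{Main obstacle.}
The chief technical difficulty will be Phase~2. The sunflower lemma is classically stated for plain \emph{sets}, whereas our objects are clauses carrying polarity information that interact nontrivially with the cardinality constraint through both positive-occurrence satisfaction and negative-occurrence satisfaction. Making the ``replace-a-sunflower-by-its-core'' step formally sound under this mixed-polarity semantics, ensuring that a sufficient surrogate of the $K_{a,b}$-free property is preserved after the reduction so that the rule can be iterated to fixpoint, and tuning the sunflower-lemma parameters to obtain the sharp $O(b)^{2b}$ and $k^b/\eps^{3b}$ dependencies (instead of cruder exponential bounds in $a,b$) is where the bulk of the argument must lie.
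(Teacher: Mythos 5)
Your Phase 1 claim is wrong and this is the central gap. Running a marginal-gain greedy for $T$ rounds does \emph{not} yield
$\val_\Phi(S^* \cap V_g) \ge \val_\Phi(S^*) - (\eps/2)\opt$ for arbitrary $S^*$: after the greedy has depleted the ``good'' clauses, a low-marginal-gain optimum variable outside $V_g$ loses all of its contribution when intersected away, and nothing in the diminishing-returns analysis recovers it. What one actually needs (and what the paper proves in Step~II, Case~I) is the much weaker statement $\opt_{\Phi|_{V_g}, k} \ge (1 - \eps)\opt_{\Phi, k}$, established by a \emph{swapping} argument: replace each $u_i \in S^* \setminus V_g$ by a carefully chosen $\tu_i \in V_g$ whose degree into the already-satisfied clauses is small, and use \Cref{lem:kst} to show such a $\tu_i$ exists. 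But this requires keeping $q = k + a(2bk/\eps)^b$ positive variables, not $O(k\log k/\eps)$ of them; for $b \ge 2$ the $K_{a,b}$-free swapping bound forces the larger exponent, which is exactly why the variable bound in \Cref{thm:main-kernel} has the $k^b/\eps^{3b}$ term coming from the positive side, not from the sunflower step. The $O(k\log k/\eps)$ budget in the theorem is reserved entirely for the \emph{negative} variables.

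Relatedly, your single greedy procedure cannot uniformly handle positive and negative variables. Setting a negative variable to true can \emph{decrease} $\val$, so ``maximize marginal increase'' does not even yield a well-defined direction for the negative side. The paper handles these completely separately: Step~I reduces the number of negative variables not by deleting variables at all, but by deleting \emph{clauses} -- it defines a \emph{normalized negative degree} $\npdeg_{\Phi, \tcV}(v) = \sum_{C: \neg v \in C} 1/|\negg(C) \cap \tcV|$, iteratively extracts variables exceeding a threshold $\tau = \eps|\cC_\neg|/(2k)$, and then throws away every clause that still has a negative literal among the unextracted variables, arguing via $\opt_{\Phi,k} \ge \val_\Phi(\emptyset) = |\cC_\neg|$ that any size-$k$ assignment can falsify only an $\eps$-fraction of the deleted clauses. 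Nothing like this appears in your sketch.

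Finally, your Phase~2 misidentifies what the sunflower lemma is for in this argument, and leaves the actual clause-count problem unaddressed. In the paper, the sunflower lemma (sharpened to exponent $b$ via $K_{a,b}$-freeness) is used in Step~II, Case~II to \emph{delete low-degree positive variables} when the optimum is small, by a swap-within-a-sunflower argument. Clause reduction is Step~III, and the obstacle there is not distinct clauses (those are already at most $O(b \cdot n^a)$ by $K_{a,b}$-freeness) but \emph{multiplicities} in the multiset $\cC$; the fix is a scaling-and-rounding of multiplicities, which your proposal never touches. Also, grouping clauses by their full signature on a $V_g$ of size $\Theta(k\log k/\eps)$ gives up to $3^{|V_g|}$ classes -- exponentially more than the target $(k/\eps)^{O(ab)}$ -- so that grouping cannot produce the claimed kernel size as stated.
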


In terms of the number of variables, the second term dominates for $b \geq 2$ and we get $O_{a,b}\left(k^b/\eps^{3b}\right)$ variables. For $b = 1$, we get $O_a\left(\frac{k \log k}{\eps} + \frac{k}{\eps^3}\right)$ variables. Up to $O\left(\log k + \frac{1}{\eps^2}\right)$ factor, this latter bound matches the aforementioned approximate kernels for Max $k$-Vertex Cover~\cite{SkowronF17,Man19}. 

Note that any parameter-preserving $(1 - \eps)$-approximate kernel with $n'$ variables allows us to get an $O(n'/k)^k \cdot (n + m)^{O(1)}$-time algorithm: by brute-force trying out all solutions in the reduced instance. (Note that there are only $\binom{n'}{0} + \cdots \binom{n'}{k} \leq O(n'/k)^k$ such solutions.) Plugging this into the above bound, we immediately get the following algorithm:

\begin{corollary} \label{cor:alg}
For any $a, b \in \N$ and $\eps \in (0, 1/2)$, there is an $(1 - \eps)$-approximation algorithm for $K_{a,b}$-free Max $k$-Weight SAT that runs in time 
\begin{itemize}
\item $\left(\frac{\log k + a}{\eps}\right)^{O(k)} \cdot (n + m)^{O(1)}$ if $b = 1$, and,
\item $\left(\frac{a^{1/b} \cdot b k}{\eps}\right)^{O(b k)} \cdot (n + m)^{O(1)}$ if $b > 1$.
\end{itemize}
\end{corollary}

In the case $a = b = d$, the running time of our algorithm is $(dk/\eps)^{O(dk)}$;
this represents an improvement over the running time of $2^{O\left(\left(dk/\eps\right)^d\right)} \cdot (n + m)^{O(1)}$ due to \cite{JKPSSSU23}. (We note that this improvement is only for Max $k$-Weight SAT, as Jain et al.~\cite{JKPSSSU23} already gave FPT-AS with a similar running time for the problem for $a = b = d$.) Furthermore, as mentioned earlier, a running time lower bound of $(1/\eps)^{\Omega(k)}$ holds even for FPT-AS for Max $k$-Vertex Cover (i.e. $a = 3, b = 1$)~\cite{Man19}. Thus, our running time is tight up to the factor of $O_{a,b}(k)$ in the base.

\paragraph{Technical Overview.}
We say that a variable $v$ is \emph{positive} w.r.t. a formula $\Phi$ if the formula does not contain the negative literal $\neg v$. Otherwise, we say that $v$ is \emph{negative} (w.r.t. $\Phi$).

There are three steps in our reduction algorithm: (I) reducing \# of negative variables, (II) reducing \# of positive variables and (III) reducing \# of clauses. 

Step I is based on the following observation: If $n \gg k/\eps$ and the incidence graph for negative literals is bi-regular (i.e. every clause has the same number of negative literals and every negative literals are in the same number of clauses), then \emph{any} solution will satisfy an $(1 - \eps)$ fraction of these clauses. (This is essentially because only an $\eps$ fraction among these literals can be false.) In other words, any solution is an $(1 - \eps)$-approximate solution. More generally, if we can find a subset of clauses satisfying this condition, then we can delete all these clauses in our reduction algorithm. Unfortunately, some formulae can be highly irregular, e.g. each negative literal can have a very different number of occurrences. To deal with this, we define a \emph{normalized negative degree} for each variable. We then iteratively pick variables whose normalized negative degree are above a certain threshold. Once no such variable exists, we stop and delete all clauses that contain at least one of the negative literals that are not chosen. Similar to above, it can be seen that any solution satisfies $1 - \eps$ fraction of the deleted clauses. Meanwhile, by a careful argument, we show that this procedure picks only $O\left(\frac{k \log k}{\eps}\right)$ variables, meaning that only those many vertices have negative literals left.

Step II is actually almost the same as those in previous work~\cite{Man19,SkowronF17}: Just keep a certain number of variables with highest degree (along with those picked in Step I). We show that the sparsity from $K_{a, b}$-freeness is also sufficient for this kernel as long as the optimum is large.
If the optimum is small, then we use a sunflower lemma-based kernel similar to that of (exact) $d$-Set Packing (e.g.~\cite{DellM12})\footnote{We remark that a very recent work of Jain et al.~\cite{JKPSSSU24} gives an (exact) kernel for $K_{d,d}$-free Max $k$-Weight SAT with the desired optimum as the parameter. We could also use their kernel to handle the small-optimum case as well, although the number of variables is slightly worse in their work.}.
For this, we also prove a sunflower lemma for $K_{a, b}$-free graphs that gives improved bounds in a certain regime of parameters, which might be of independent interest.
Finally, for Step III, it should be noted that if at this point $\cC$ is a \emph{set} (instead of \emph{multiset}), then we would have been done because $K_{a, b}$-freeness immediately implies that there are at most $O(b \cdot n^{a})$ distinct clauses. Thus, Step III is essentially a repetition reduction algorithm; applying known techniques from the literature~\cite{CST01} (namely scaling and rounding) immediately yields the claimed result.


\paragraph{On Independent Work of Inamdar et al.~\cite{IJLSSU24}.} Independently of our work, Inamdar et al.~\cite{IJLSSU24} has obtained a set of related results. Compared to our work, the most relevant result from~\cite{IJLSSU24} is a \emph{randomized} FPT-AS for $K_{d,d}$-free Max $k$-Weight SAT that runs in time $(dk/\eps)^{O(dk)} \cdot (n + m)^{O(1)}$. This is exactly the same running time as ours, but their algorithm is randomized whereas ours is deterministic. Their result is based on an elegant randomized assignment approach, which shows that Max $k$-Weight SAT can be essentially ``reduced'' to Max $k$-Coverage with a small overhead in the running time. While there seems to be some similarity between their approach and our Step I, it is unclear how to apply their technique directly to obtain an approximate kernel.

\section{Preliminaries}

For convenience, we represent any solution to Max $k$-Weight SAT as the set $Y \subseteq \cV$ (such that $|Y| \leq k$) of variables that are set to one. 
We write $\val_{\Phi}(Y)$ to denote the number of clauses in $\Phi$ satisfied by $Y$.
Let $\opt_{\Phi, k}$ denote the optimum, i.e. $\opt_{\Phi, k} = \max_{Y \in \binom{\cV}{\leq k}} \val_{\Phi}(Y)$.


In the subsequent analyses, it is useful to allow \emph{additive} errors in the approximation ratio too (rather than just multiplicative as in \cite{LPRS17}). Thus, let us also define $(\alpha, \gamma)$-APPA to be exactly the same as $\alpha$-APPA except that the output of $\cB$ is only required to be an $(\alpha\beta - \gamma)$-approximate solution. The following lemma allows us to relate this new notion to the standard one:

\begin{lemma} \label{lem:add-rem}
For any $\eps_1, \eps_2, c \in (0, 1)$, suppose that a maximization problem admits a polynomial-time $c$-approximation algorithm and an $(1 - \eps_1, \eps_2)$-APPA. Then it admits an $(1 - \eps_1 - \eps_2 / c)$-APPA with the same reduction algorithm. 
\end{lemma}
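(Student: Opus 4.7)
The plan is to leave the reduction algorithm $\cA$ of the given $(1 - \eps_1, \eps_2)$-APPA untouched (as required by the phrase ``with the same reduction algorithm'') and only to modify the solution-lifting algorithm. Call the existing lifting algorithm $\cB$ and the polynomial-time $c$-approximation algorithm $\mathcal{G}$. On input $(I, k)$, the reduction $\cA$ produces $(I', k')$; then, given a $\beta$-approximate solution to $(I', k')$, the new lifting step does two things: (i) it runs $\cB$ to obtain a candidate $Y_1$, which by hypothesis satisfies $\val(Y_1) \geq ((1 - \eps_1)\beta - \eps_2) \cdot \opt(I, k)$; and (ii) it also runs $\mathcal{G}$ directly on $(I, k)$ to obtain a candidate $Y_2$ with $\val(Y_2) \geq c \cdot \opt(I, k)$. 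The new lifting algorithm outputs whichever of $Y_1, Y_2$ has the larger objective value.

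The correctness analysis is a two-line case split on the size of $\beta$ relative to $c$. If $\beta \geq c$, then
\[
\bigl((1 - \eps_1)\beta - \eps_2\bigr) - (1 - \eps_1 - \eps_2/c)\,\beta \;=\; \eps_2\bigl(\beta/c - 1\bigr) \;\geq\; 0,
\]
so $Y_1$ already achieves the target ratio $(1 - \eps_1 - \eps_2/c)\beta$. If instead $\beta < c$, then $(1 - \eps_1 - \eps_2/c)\,\beta < \beta < c$, so the $c$-approximate $Y_2$ trivially beats the target. Taking the better of the two thus always yields a $(1 - \eps_1 - \eps_2/c)\beta$-approximate solution of $(I, k)$, as required.

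I do not expect any real obstacle here: the statement is essentially a bookkeeping observation that trades an additive slack $\eps_2$ for a multiplicative loss $\eps_2/c$, by using $\val(Y_2) \geq c \cdot \opt(I, k)$ as a cheap lower bound on $\opt(I, k)$ precisely when $\beta$ is too small for $\cB$'s additive error to be absorbed. The only minor sanity checks are that the new lifting step remains polynomial time (it calls $\cB$ and $\mathcal{G}$ once each) and that, because the reduction step is unchanged, the kernel size bound and the identity $k' = k$ transfer verbatim.
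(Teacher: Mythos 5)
Your proposal is correct and uses exactly the same construction as the paper: keep the reduction unchanged, and have the new lifting algorithm return the better of $\cB$'s output and the $c$-approximation run directly on the original instance. The paper states that the verification is ``simple to check'' and omits it; your case split on $\beta \gtrless c$ is the calculation the paper left implicit, and it is carried out correctly.
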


\begin{proof}
Let $(\cA, \cB)$ be the $(1 - \eps_1, \eps_2)$-APPA. We use the same reduction algorithm, but use the following solution lifting algorithm: output the best solution between one returned by $\cB$ and the approximation algorithm. It is simple to check that this is an $(1 - \eps_1 - \eps_2 / c)$-APPA as desired.
\end{proof}

We sometimes abuse the notation and refer to $\cA$ itself as the APPA/kernel and leave $\cB$ implicit.

We also recall the approximation algorithm for Max $k$-Weight SAT discussed in the introduction:
\begin{theorem}[\cite{Sviridenko01}] \label{lem:polytime-apx}
There is a polynomial-time $\left(1 - \frac{1}{e}\right)$-approximation for Max $k$-Weight SAT.
\end{theorem}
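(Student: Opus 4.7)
The plan is to follow the pipage-rounding approach of Ageev--Sviridenko. First, formulate the natural integer program: introduce indicator variables $x_v \in \{0,1\}$ for $v \in \cV$ together with relaxation variables $z_C \in [0,1]$ for each clause $C \in \cC$, and maximize $\sum_C z_C$ subject to $z_C \leq \sum_{v \in C^+} x_v + \sum_{v \in C^-}(1-x_v)$, where $C^+, C^-$ denote the positive and negative literals of $C$, together with $\sum_{v \in \cV} x_v \leq k$. Relax $x_v \in [0,1]$ and solve the resulting LP in polynomial time to obtain a fractional optimum $x^*$ with value $\opt_{LP} \geq \opt_{\Phi,k}$.

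Next, define the probabilistic extension $F(x) = \sum_{C \in \cC}\bigl(1 - \prod_{v \in C^+}(1-x_v)\prod_{v \in C^-} x_v\bigr)$, which is precisely the expected number of satisfied clauses when each $v \in \cV$ is independently set to true with probability $x_v$. A standard AM--GM estimate, applied clause-by-clause, shows that for any clause $C$ of length $\ell$ the contribution at $x^*$ is at least $\bigl(1 - (1 - 1/\ell)^\ell\bigr)\cdot z_C^* \geq (1 - 1/e)\, z_C^*$. Summing over clauses gives $F(x^*) \geq (1-1/e)\cdot \opt_{LP} \geq (1-1/e)\cdot \opt_{\Phi,k}$.

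Finally, apply pipage rounding to $x^*$: repeatedly pick two fractional coordinates $u,v$ in the current point $x$ and move along the direction $e_u - e_v$ until one of them becomes integral. The crucial property, verified by a short clause-by-clause case analysis, is that $F$ is convex along any such line; therefore moving to the better endpoint can only increase $F$, while the cardinality sum $\sum_v x_v$ is preserved because the move has zero net change. After at most $|\cV|$ such steps the point becomes integral, and reading off $Y = \{v : x_v = 1\}$ yields a feasible solution of size at most $k$ with $\val_\Phi(Y) = F(\ind_Y) \geq (1-1/e)\,\opt_{\Phi,k}$.

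The main technical obstacle -- already resolved in \cite{Sviridenko01} -- is the verification that $F$ is convex along the pipage directions $e_u - e_v$; this requires expanding each clause's contribution and examining the four cases (clause touches both, one, or neither of $u,v$, with each polarity), and relies on the fact that each per-clause term is a degree-$\leq 2$ polynomial in the single parameter along the line with nonpositive leading coefficient replaced by a convex shape after the $1 - \prod$ transformation. Once this convexity is in hand, the LP-rounding step, the AM--GM estimate, and the pipage iteration are all routine, yielding the claimed polynomial-time $(1-1/e)$-approximation.
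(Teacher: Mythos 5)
The paper does not prove this statement at all---it is imported as a black box from \cite{Sviridenko01}---so the only question is whether your sketch would actually constitute a proof, and it would not: the key claim on which your pipage step rests is false. For a clause containing the two swap variables with \emph{opposite} polarities, the per-clause term is concave, not convex, along the direction $e_u - e_v$. Concretely, for the clause $(u \vee \neg v)$ the contribution is $1 - (1-x_u)x_v$, and along $x_u \mapsto x_u + t$, $x_v \mapsto x_v - t$ this becomes $1 - (1 - x_u - t)(x_v - t)$, whose quadratic term in $t$ has coefficient $-1$; at $x_u = x_v = \tfrac12$ it is $1 - (\tfrac12 - t)^2$, strictly concave. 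A formula built from such mixed-polarity clauses therefore gives an $F$ that is strictly concave along the swap direction, both endpoints can be strictly worse than the fractional point, and the assertion that ``moving to the better endpoint can only increase $F$'' fails. The convexity you invoke is exactly the $\varepsilon$-convexity of Ageev--Sviridenko, and it holds only in the monotone case (all literals positive, i.e.\ Max $k$-Coverage); it does not extend to general Max $k$-Weight SAT, which is precisely the regime the theorem is about.

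This is not a cosmetic issue you can delegate back to the citation: handling clauses with negated variables under the cardinality constraint is the substantive content of \cite{Sviridenko01}, and it is resolved there by a different rounding scheme and analysis, not by the convexity check you describe (which, as shown, cannot be verified because it is false). Your LP formulation and the clause-by-clause AM--GM bound $1 - \prod_{v \in C^+}(1-x_v^*)\prod_{v \in C^-}x_v^* \geq \bigl(1 - (1-1/\ell)^\ell\bigr) z_C^*$ are fine, and the whole argument is correct for the monotone special case, but as a proof of the stated theorem for general CNF formulas it has a genuine gap at its central step.
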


As mentioned earlier, our kernel often involves deleting variables or clauses. We abstract the conditions required for such pre-processing algorithms to be APPA in the two lemmas below. Here we use $\Iden$ to denote the identity solution lifting algorithm (i.e. one that outputs the input).

\begin{lemma}[Clause Modification APPA] \label{lem:clause-modif}
Suppose that $\cA$ is a parameter-preserving reduction algorithm for Max $k$-Weight SAT that also preserves the set of variables, i.e. on input $(\Phi = (\cV, \cC), k)$, it produces $(\Phi' = (\cV', \cC'), k)$ with $\cV' = \cV$. If there exists $\delta, h \geq 0$ and $s > 0$ (where $h,s$ can depend on $(\Phi, k)$) such that the following holds for all solution $Y$:
\begin{align} \label{eq:clause-preservation}
|\val_{\Phi}(Y) - s \cdot \val_{\Phi'}(Y) - h| \leq \delta \cdot \opt_{\Phi, k},
\end{align}
then $(\cA, \Iden)$ is an $(1, 2\delta)$-APPA.
\end{lemma}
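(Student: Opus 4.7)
The plan is to unroll the definition of $(1, 2\delta)$-APPA: assuming $Y$ is a $\beta$-approximate solution for $(\Phi', k)$, i.e. $\val_{\Phi'}(Y) \geq \beta \cdot \opt_{\Phi', k}$ with $\beta \in [0,1]$, I want to show that $\val_{\Phi}(Y) \geq (\beta - 2\delta) \cdot \opt_{\Phi, k}$. Since $\cA$ preserves both $\cV$ and $k$, every feasible solution of $(\Phi, k)$ is also feasible for $(\Phi', k)$ and vice versa, so both optima are taken over the same family $\binom{\cV}{\leq k}$.

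First I would apply the hypothesis (\ref{eq:clause-preservation}) to an optimal solution $Y^*$ of $(\Phi, k)$, which gives $s \cdot \val_{\Phi'}(Y^*) + h \geq \val_{\Phi}(Y^*) - \delta \cdot \opt_{\Phi, k} = (1-\delta) \cdot \opt_{\Phi, k}$. Since $Y^*$ is feasible for $\Phi'$, we have $\val_{\Phi'}(Y^*) \leq \opt_{\Phi', k}$, and therefore
\[
s \cdot \opt_{\Phi', k} + h \;\geq\; (1-\delta) \cdot \opt_{\Phi, k}.
\]

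Next I would apply (\ref{eq:clause-preservation}) to $Y$ in the other direction, obtaining $\val_{\Phi}(Y) \geq s \cdot \val_{\Phi'}(Y) + h - \delta \cdot \opt_{\Phi, k} \geq \beta s \cdot \opt_{\Phi', k} + h - \delta \cdot \opt_{\Phi, k}$. The only subtlety is that the previous display bounds $s \cdot \opt_{\Phi', k} + h$, not $\beta s \cdot \opt_{\Phi', k} + h$. This is handled by the assumptions $\beta \leq 1$ and $h \geq 0$: since $h \geq \beta h$, we have $\beta s \cdot \opt_{\Phi', k} + h \geq \beta \cdot (s \cdot \opt_{\Phi', k} + h) \geq \beta(1-\delta) \cdot \opt_{\Phi, k}$.

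Combining the two inequalities, $\val_{\Phi}(Y) \geq \beta(1-\delta) \cdot \opt_{\Phi, k} - \delta \cdot \opt_{\Phi, k} = (\beta - \beta\delta - \delta) \cdot \opt_{\Phi, k} \geq (\beta - 2\delta) \cdot \opt_{\Phi, k}$, again using $\beta \leq 1$. This is exactly the $(1, 2\delta)$-APPA guarantee for the identity lifting $\Iden$. There is no real obstacle here; the main thing to be careful about is that the additive error term $\delta \cdot \opt_{\Phi, k}$ on the hypothesis side is measured with respect to $\opt_{\Phi, k}$ (the true optimum), which is why it contributes additively twice -- once when transferring the optimum from $\Phi$ to $\Phi'$ and once when transferring the value of $Y$ back from $\Phi'$ to $\Phi$ -- giving the factor $2\delta$ in the conclusion.
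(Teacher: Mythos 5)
Your proof is correct and follows essentially the same route as the paper: both apply the hypothesis \eqref{eq:clause-preservation} once to $Y$ and once to a $\Phi$-optimum $Y^*$, use the feasibility of $Y^*$ in $\Phi'$, and use $h \geq 0$ together with $\beta \leq 1$ to absorb the extraneous $(1-\beta)h$ term. The only cosmetic difference is that you route the second inequality through $\opt_{\Phi',k}$ whereas the paper's chain goes through $\val_{\Phi'}(Y^*)$; the resulting algebra is identical.
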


\begin{proof}
Consider any $\beta$-approximate solution $Y$ to $(\Phi', k)$. Let $Y^*$ denote the optimum solution for $(\Phi, k)$. We can conclude that
\begin{align*}
\val_\Phi(Y) &\geq s \cdot \val_{\Phi'}(Y) + h - \delta \cdot \opt_{\Phi, k} \\
&\geq s\beta \cdot \val_{\Phi'}(Y^*) + h - \delta \cdot \opt_{\Phi, k} \\
&\geq s\beta \cdot \left(\frac{1}{s}\left(\val_{\Phi}(Y^*) - h - \delta \cdot \opt_{\Phi, k}\right)\right) + h - \delta \cdot \opt_{\Phi, k} 
&\geq \left(\beta - 2 \delta\right) \cdot \opt_{\Phi, k}. & \qedhere
\end{align*}
\end{proof}

\begin{lemma}[Variable Deletion APPA] \label{lem:var-deletion}
Suppose that $\cA$ is a parameter-preserving reduction algorithm for Max $k$-Weight SAT that just deletes a subset of variables (and all of their literals). If $\opt_{\Phi', k} \geq (1 - \delta) \cdot \opt_{\Phi, k}$ for some $\delta \in (0, 1)$,
then $(\cA, \Iden)$ is an $(1 - \delta)$-APPA.
\end{lemma}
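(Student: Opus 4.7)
The plan is to verify that the identity solution-lifting is valid by comparing values of solutions in $\Phi$ versus $\Phi'$. Let $\cV'$ denote the variable set of $\Phi'$ (i.e., $\cV$ with some variables removed). The reduction algorithm produces $\Phi'$ by deleting a subset of variables and stripping all occurrences of their literals from every clause, while keeping the clause multiset otherwise intact. Any $Y$ that is a valid solution to $(\Phi', k)$ satisfies $Y \subseteq \cV'$ and $|Y| \leq k$, so it is also a valid solution to $(\Phi, k)$.

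The first key step is the monotonicity observation: for every $Y \subseteq \cV'$, we have $\val_{\Phi}(Y) \geq \val_{\Phi'}(Y)$. This is because each clause $C'$ of $\Phi'$ is obtained from the corresponding clause $C$ of $\Phi$ by removing a subset of literals, so as a set of literals $C' \subseteq C$. If some literal of $C'$ is satisfied by $Y$, that same literal lies in $C$ and is satisfied by $Y$, hence $C$ is satisfied too. (Note that literals involving deleted variables are automatically set to false under $Y$, but this can only help $\Phi$ by additionally satisfying clauses that contain the corresponding negated literal; it never hurts.)

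The second step is to chain the inequalities. Take any $\beta$-approximate solution $Y$ to $(\Phi', k)$, so $\val_{\Phi'}(Y) \geq \beta \cdot \opt_{\Phi', k}$. Combining with the monotonicity observation and the hypothesis $\opt_{\Phi', k} \geq (1 - \delta) \cdot \opt_{\Phi, k}$, we conclude
\begin{align*}
\val_{\Phi}(Y) \;\geq\; \val_{\Phi'}(Y) \;\geq\; \beta \cdot \opt_{\Phi', k} \;\geq\; (1 - \delta)\beta \cdot \opt_{\Phi, k},
\end{align*}
which is precisely the statement that $\Iden(Y) = Y$ is an $((1 - \delta)\beta)$-approximate solution to $(\Phi, k)$. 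Since $\cA$ is parameter-preserving, this certifies $(\cA, \Iden)$ as an $(1 - \delta)$-APPA.

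There is essentially no obstacle here: the only subtle point is making sure that the monotonicity observation is stated for the correct semantics of ``deleting a variable and all of its literals'' (literals of deleted variables are removed from clauses, but the clauses themselves remain, possibly shrunk). Once this convention is fixed, the argument is a one-line chain of inequalities and mirrors the structure of the proof of \Cref{lem:clause-modif}.
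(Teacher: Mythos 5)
Your proof is correct and follows essentially the same one-line chain of inequalities as the paper, namely $\val_\Phi(Y) \geq \val_{\Phi'}(Y) \geq \beta \cdot \opt_{\Phi', k} \geq \beta(1 - \delta) \cdot \opt_{\Phi, k}$; the paper simply asserts the monotonicity $\val_\Phi(Y) \geq \val_{\Phi'}(Y)$ without spelling it out. Your extra care in justifying the monotonicity (clauses only shrink, and stripped negated literals of deleted variables can only raise $\val_\Phi$) is a welcome clarification but not a different argument.
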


\begin{proof}
Consider any $\beta$-approximate solution $Y$ to $(\Phi', k)$. Since $\Phi'$ results from deleting variables (and literals) from $\Phi$, we have $\val_\Phi(Y) \geq \val_{\Phi'}(Y) \geq \beta \cdot \opt_{\Phi', k} \geq \beta(1 - \delta) \cdot \opt_{\Phi, k}$.
\end{proof}

Finally, we use the following lemma which states a certain sparsity condition on $K_{a, b}$-free graphs. This lemma is very similar (but not exactly identical) to the classic K{\H{o}}v{\'a}ri-S{\'o}s-Tur{\'a}n bound~\cite{KST54} and also to \cite[Lemma 4.1]{JKPSSSU23}. We include the proof, which is almost the same as in those aforementioned work, in the appendix for completeness.

\begin{lemma}\label{lem:kst}
Let $a, b, n_L, n_R, d \in \N$ be such that $d \geq 2b, n_L \geq a \cdot \left(2n_R/d\right)^b$. Then, for any $K_{a, b}$-free bipartite graph with $n_L$ left vertices and $n_R$ right vertices, there exists a left vertex with degree $\leq d$. 
\end{lemma}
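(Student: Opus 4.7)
The plan is to run the standard Kővári--Sós--Turán double counting argument and then use the hypothesis $d \geq 2b$ to turn the resulting binomial-coefficient inequality into the clean polynomial bound stated in the lemma. I would argue by contrapositive: assume that every left vertex has degree $> d$ (equivalently $\geq d+1$), and aim to derive $n_L < a(2n_R/d)^b$.

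The first step is to count ordered tuples $(v, (u_1,\ldots,u_b))$ where $v$ is a left vertex and $(u_1,\ldots,u_b)$ is an ordered $b$-tuple of distinct right vertices all adjacent to $v$. Call this count $T$. On the one hand, since each left vertex has degree at least $d+1$, each contributes at least $(d+1) d (d-1) \cdots (d-b+2) \geq d(d-1)\cdots(d-b+1)$ tuples. The hypothesis $d \geq 2b$ gives $d - b + 1 \geq b + 1 > d/2$, so every factor in this falling product exceeds $d/2$, yielding $T > n_L (d/2)^b$.

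On the other hand, if any ordered $b$-tuple of distinct right vertices had $a$ or more common left neighbors, the graph would contain a $K_{a,b}$. Hence each such tuple contributes at most $a-1$ to $T$, giving $T \leq (a-1) \cdot n_R(n_R-1)\cdots(n_R-b+1) < a \cdot n_R^b$. Combining the two bounds yields $n_L (d/2)^b < a \cdot n_R^b$, i.e.\ $n_L < a(2n_R/d)^b$, contradicting the hypothesis.

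There is no real obstacle here; the only subtlety is making sure the estimate of the falling factorial from below is handled correctly, which is precisely where the assumption $d \geq 2b$ is used (and why the $2$ appears in the factor $2n_R/d$ in the statement). I would work with ordered tuples rather than $b$-subsets purely to avoid carrying around $b!$ on both sides, although either choice leads to the same conclusion.
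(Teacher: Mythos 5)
Your proof is correct and is the same Kővári--Sós--Turán double counting the paper uses (the paper works with unordered $b$-subsets and binomial coefficients, you work with ordered tuples and falling factorials, but that is cosmetic). One small slip to fix: you justify "every factor exceeds $d/2$" via the chain $d-b+1 \geq b+1 > d/2$, but the second inequality fails whenever $d \geq 2b+2$; the correct step is $d-b+1 \geq d - d/2 + 1 = d/2 + 1 > d/2$, which follows directly from $b \leq d/2$ and gives exactly the claimed conclusion.
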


\section{Approximate Kernel for Max $k$-Weight SAT}

Throughout the remainder of this section, we let $a, b$ be any positive integers and $\eps$ be any real number in $(0, 1/4)$. For brevity, we will not state this assumption explicitly in the lemma statements.

For each variable $v$, let $\deg_{\Phi}(v)$ denote its degree in the incidence graph. 
For each clause $C$, let $\negg(C)$ denote the set of variables with negative literals in it, i.e. $\{v \in \cV \mid \neg v \in C\}$. Let $\cC_{\neg}$ denote the multiset of all clauses with at least one negative literal, i.e. $\{C \in \cC \mid \negg(C) \ne \emptyset\}$. 

\subsection{Step I: Reducing \# Negative Variables}


The first step of our reduction is described and analyzed below. Note that the condition that no clause contains $k+1$ negative literals is w.l.o.g. since these clauses are always true in any solution\footnote{See \Cref{sec:main-proof} for more detail.}.

\begin{lemma} \label{lem:ignore-positive-instances}
There is a  parameter-preserving $(1 - \eps)$-APPA for Max $k$-Weight SAT such that, if the input formula contains no clause with (at least) $k+1$ negative literals, then the output formula contains $O\left(\frac{k \log k}{\eps}\right)$ negative variables.
\end{lemma}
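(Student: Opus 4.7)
The plan is a greedy reduction: build a small set $V_{\mathrm{keep}}$ of ``heavy'' negative variables by repeatedly picking variables with high \emph{normalized} negative degree, then delete every clause whose negative literals all lie outside $V_{\mathrm{keep}}$. Intuitively, once $V_{\mathrm{keep}}$ has absorbed the heaviest-degree negatives, the remaining (to-be-deleted) clauses are so ``bi-regular'' in their negative literals that no size-$k$ solution can falsify more than an $\eps$-fraction of them. I would therefore apply \Cref{lem:clause-modif} to conclude a $(1, O(\eps))$-APPA for the deletion, then compose with Sviridenko's $(1 - 1/e)$-approximation via \Cref{lem:add-rem} to obtain the target $(1-\eps)$-APPA after rescaling $\eps$.

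Concretely, define for each multiset $\cC' \subseteq \cC_{\neg}$ and each $v \in \cV$ the \emph{normalized negative degree}
\[
\deg_{\negg}^{\cC'}(v) \;:=\; \sum_{C \in \cC',\, v \in \negg(C)} \frac{1}{|\negg(C)|},
\]
so that $\sum_{v} \deg_{\negg}^{\cC'}(v) = |\cC'|$. Start with $V_{\mathrm{keep}} = \emptyset$ and $\cC_1 = \cC_{\neg}$; at step $i$, if there exists $v \notin V_{\mathrm{keep}}$ with $\deg_{\negg}^{\cC_i}(v) > \tau$ for a threshold $\tau = \Theta(\eps \cdot |\cC_{\neg}|/k)$, add $v$ to $V_{\mathrm{keep}}$ and set $\cC_{i+1} := \{C \in \cC_i : v \notin \negg(C)\}$; otherwise halt with terminal multiset $\cC_{\mathrm{end}}$. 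The output $\Phi'$ is $\Phi$ with every clause of $\cC_{\mathrm{end}}$ removed.

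For the APPA analysis, fix any $Y$ with $|Y| \leq k$. A deleted clause $C \in \cC_{\mathrm{end}}$ fails to be satisfied only if $\negg(C) \subseteq Y$, and the double-counting $\ind[\negg(C) \subseteq Y] \leq \frac{1}{|\negg(C)|} \sum_{v \in Y \cap \negg(C)} 1$ gives
\[
\bigl|\{C \in \cC_{\mathrm{end}} : \negg(C) \subseteq Y\}\bigr| \;\leq\; \sum_{v \in Y \setminus V_{\mathrm{keep}}} \deg_{\negg}^{\cC_{\mathrm{end}}}(v) \;\leq\; k\tau,
\]
since $V_{\mathrm{keep}}$-variables have zero contribution in $\cC_{\mathrm{end}}$ (no remaining clause contains them negatively) and the rest are below the stopping threshold. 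Hence $\val_\Phi(Y) - \val_{\Phi'}(Y) \in [|\cC_{\mathrm{end}}| - k\tau,\, |\cC_{\mathrm{end}}|]$; applying \Cref{lem:clause-modif} with $h := |\cC_{\mathrm{end}}| - k\tau/2$ and $s := 1$ yields an $(1, O(k\tau/\opt))$-APPA. Since the all-false assignment satisfies every clause of $\cC_{\neg}$, we have $\opt \geq |\cC_{\neg}|$, so $k\tau/\opt = O(\eps)$ by the choice of $\tau$.

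For the variable count, each iteration deletes from $\cC_i$ the set $S := \{C \in \cC_i : v_i \in \negg(C)\}$; since $\sum_{C \in S} 1/|\negg(C)| = \deg_{\negg}^{\cC_i}(v_i) > \tau$ and each summand is at most $1$, $|S| > \tau$. The greedy therefore halts after at most $|\cC_{\neg}|/\tau = O(k/\eps)$ rounds with $|V_{\mathrm{keep}}| = O(k/\eps)$, which already fits inside the claimed $O(k\log k/\eps)$ bound. The main obstacle I anticipate is less the greedy analysis and more the book-keeping around not knowing $\opt$ and interfacing with the subsequent reduction steps: $|\cC_{\neg}|$ is a convenient proxy precisely because $\opt \geq |\cC_{\neg}|$, but one must check that the ``clean'' threshold $\tau = \Theta(\eps |\cC_{\neg}|/k)$ remains meaningful when $|\cC_{\neg}|$ is small (here the precondition $|\negg(C)| \leq k$ enters, forbidding any single clause from spreading its mass arbitrarily thin) and that the resulting $\Phi'$ retains the structural properties needed by the positive-variable and clause-count reductions in the next steps---the extra $\log k$ slack in the stated bound likely absorbs this interface overhead.
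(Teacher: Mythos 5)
Your deletion rule is inverted, and this is a genuine gap, not a bookkeeping issue. You delete exactly the clauses in $\cC_{\mathrm{end}}$, i.e.\ those $C$ with $\negg(C)\cap V_{\mathrm{keep}}=\emptyset$, which keeps every clause that has \emph{at least one} negative literal inside $V_{\mathrm{keep}}$. But such a kept clause can also contain negative literals on variables outside $V_{\mathrm{keep}}$, and those variables then remain negative in $\Phi'$. Concretely: suppose $\cC_{\neg}$ consists of $100$ copies of $(\neg v_1)$ and one clause $(\neg v_1 \vee \neg v_2 \vee \neg v_3)$. Your greedy picks $v_1$, removes all of $\cC_{\neg}$ from the working multiset, halts with $\cC_{\mathrm{end}}=\emptyset$ and $V_{\mathrm{keep}}=\{v_1\}$, and outputs $\Phi'=\Phi$ --- which still has three negative variables. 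So the ``output formula contains $O(k\log k/\eps)$ negative variables'' conclusion is never established. (Your APPA accounting for the clauses you \emph{do} delete is fine, but that is not the bottleneck.)

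The correct rule, as in the paper, is to delete $C$ whenever $\negg(C)$ contains \emph{any} variable outside the kept set, i.e.\ keep $C$ only if $\negg(C)\subseteq V_{\mathrm{keep}}$. If you try to patch your argument by swapping to this rule but keep your fixed normalization $1/|\negg(C)|$ and your stopping condition on $\deg_{\negg}^{\cC_{\mathrm{end}}}$, the APPA bound breaks: an unsatisfied deleted clause contributes $1/|\negg(C)\setminus V_{\mathrm{keep}}|$ per witness $v\in Y\setminus V_{\mathrm{keep}}$, and summing these over deleted clauses containing $\neg v$ gives $\sum_{C:\neg v\in C}1/|\negg(C)\setminus V_{\mathrm{keep}}|$ --- a quantity your termination condition does not control, because your $\deg_{\negg}^{\cC_{\mathrm{end}}}(v)$ only sums over the clauses with no negative literal in $V_{\mathrm{keep}}$ and uses the full $|\negg(C)|$ in the denominator. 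The paper resolves this by never removing clauses during the greedy and by normalizing dynamically: $\npdeg_{\Phi,\tcV_i}(v)=\sum_{C:\neg v\in C}1/|\negg(C)\cap\tcV_i|$, with the denominator shrinking as variables are selected. This is exactly the quantity needed in the final charging argument. The price of the dynamic denominator is that one clause can be charged with terms $\frac{1}{k+1},\frac{1}{k},\dots,1$ across different iterations, a harmonic sum of order $\log k$, which is where the $\log k$ in the bound actually comes from --- it is not ``interface slack.'' Your claimed $O(k/\eps)$ round bound is correct for \emph{your} greedy, but that greedy is too weak to support the corrected deletion rule.
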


\begin{proof}
From \Cref{lem:add-rem} and \Cref{lem:polytime-apx}, it suffices to give an $(1, \eps)$-APPA with the claimed property.

For any instance $\Phi = (\cV, \cC)$ and a subset $\tcV \subseteq \cV$, let the \emph{normalized negative degree} of $v \in \tcV$ w.r.t. $\Phi, \tcV$ be defined as $\npdeg_{\Phi, \tcV}(v) := \sum_{C \in \cC \atop \negg(C) \ni v} \frac{1}{|\negg(C) \cap \tcV|}$. The reduction algorithm $\cA$ is iterative and, on input $(\Phi, k)$, it works as follows:
\begin{itemize}
\item Let $\tau = \frac{\eps}{2k} \cdot |\cC_{\neg}|$, and start with $\tcV_0 = \cV$ and $i = 0$.
\item While there exists $v_{i+1} \in \tcV_i$ such that $\npdeg_{\Phi, \tcV_i}(v_{i+1}) > \tau$, let $\tcV_{i+1} \gets \tcV_i \setminus \{v_{i+1}\}$ and increment $i$ by one.
\item Finally, output the formula $\Phi' = (\cV, \cC')$ where $\cC'$ results from removing all clauses containing a negative literal from $\tcV_i$. (Formally, $\cC' = \{C \in \cC \mid \negg(C) \cap \tcV_i = \emptyset\}$.)
\end{itemize}
Let $\ifin$ be the value of $i$ at the end of the algorithm. All negative variables in $\Phi'$ belong to $\cV \setminus \tcV_{\ifin}$. Thus, for the claimed number of negative variables in $\Phi'$, it suffices to show $\ifin = O\left(\frac{k \log k}{\eps}\right)$. By the while-loop, we have
\begin{align*}
\ifin \cdot \tau &< \sum_{i \in [\ifin]} \npdeg_{\Phi, \tcV_{i-1}}(v_i) 
&= \sum_{i \in [\ifin]} \sum_{C \in \cC \atop \negg(C) \ni v_i} \frac{1}{|\negg(C) \cap \tcV_{i-1}|} 
&= \sum_{C \in \cC_{\neg}} \sum_{i \in [\ifin] \atop v_i \in \negg(C)} \frac{1}{|\negg(C) \cap \tcV_{i-1}|}.
\end{align*}
Let us fix $C \in \cC_{\neg}$.
Notice that, for all $i$ such that $v_i \in \negg(C)$, $|\negg(C) \cap \tcV_{i-1}|$ are distinct because $v_i$ is removed from $\tcV_{i - 1}$ immediately after. Since we assume that $|\negg(C)| \leq k + 1$, we thus have
\begin{align*}
\ifin \cdot \tau 
< \sum_{C \in \cC_{\neg}} \left(\frac{1}{k + 1} + \frac{1}{k} + \cdots + 1\right)
\leq |\cC_{\neg}| \cdot \left(\ln(k+1) + 1\right).
\end{align*}
As a result, we have $\ifin \leq O\left(\frac{|\cC_{\neg}| \cdot \log k}{\tau}\right) = O\left(\frac{k \log k}{\eps}\right)$ as desired.

Let $\cCdel := \cC \setminus \cC'$ denote the multiset of deleted clauses.
We will next argue that \eqref{eq:clause-preservation} holds for $s = 1, h = |\cCdel|$ and $\delta = \eps/2$. 
To do this, consider any solution $Y \in \binom{\cV}{\leq k}$. First, it is obvious that
\begin{align} \label{eq:val-ub}
\val_{\Phi}(Y) \leq \val_{\Phi'}(Y) + |\cCdel|.
\end{align}
Next, let $\cCdel^{\UNSAT(Y)}$ denote the multiset of clauses in $\cCdel$ \emph{not} satisfied by $Y$. For $C \in \cCdel^{\UNSAT(Y)}$, we must have $\negg(C) \subseteq Y$. Recall that every $C \in \cCdel$ satisfies $\negg(C) \cap \tcV_{\ifin} \ne \emptyset$. This implies
\begin{align}
\left|\cCdel^{\UNSAT(Y)}\right| &= \sum_{C \in \cCdel^{\UNSAT(Y)}} \sum_{v \in \left(\negg(C) \cap \tcV_{\ifin}\right)} \frac{1}{|\negg(C) \cap \tcV_{\ifin}|} \nonumber \\
&\overset{(\spadesuit)}{=} \sum_{v \in (\tcV_{\ifin} \cap Y)} \sum_{C \in \cCdel^{\UNSAT(Y)} \atop \negg(C) \ni v} \frac{1}{|\negg(C) \cap \tcV_{\ifin}|} \nonumber \\
&\leq \sum_{v \in (\tcV_{\ifin} \cap Y)} \npdeg_{\Phi, \tcV_{\ifin}}(v) \nonumber \\
&\overset{(\heartsuit)}{\leq} |Y| \cdot \tau \nonumber \\
&\leq \frac{\eps}{2} \cdot |C_{\neg}|, \nonumber
\end{align}
where $(\spadesuit)$ follows from $\negg(C) \subseteq Y$ for all $C \in \cCdel^{\UNSAT(Y)}$ and $(\heartsuit)$ is from the while-loop condition. 

Next, observe that $\opt_{\Phi, k} \geq \val_{\Phi}(\emptyset) = |C_{\neg}|$. Combining this with the above, we then get
\begin{align} \label{eq:val-lb}
\val_{\Phi}(Y) = \val_{\Phi'}(Y) + \left|\cCdel\right| - \left|\cCdel^{\UNSAT(Y)}\right| \geq \val_{\Phi'}(Y) + \left|\cCdel\right| - 0.5\eps \cdot \opt_{\Phi, k}.
\end{align}
From \eqref{eq:val-ub} and \eqref{eq:val-lb}, we have that \eqref{eq:clause-preservation} holds for  $s = 1, h = |\cCdel|$ and $\delta = \eps/2$. Thus, \Cref{lem:clause-modif} implies that this is an $(1, \eps)$-APPA as desired.
\end{proof}

\subsection{Step II: Reducing \# Positive Variables}

\subsubsection{A Sunflower Lemma}

As mentioned earlier, this step will require a sunflower lemma-based reduction algorithm. We remark that the use of the sunflower lemma in kernelization is a standard technique; see e.g. \cite[Section 8]{kernelization-book}. In our application, we require a slightly better bound than the classic sunflower lemma~\cite{ER60}, which we will achieve under the $K_{a, b}$-free assumption. Below, we will state this lemma in terms of bipartite graphs instead of set systems, since this is more convenient for us.

We write $N_G(v)$ to denote the set of neighbors of $v$ in graph $G$; for a set of vertices $T$, we let $N_G(T) := \bigcup_{v \in T} N_G(v)$. In a bipartite graph $G = (A, B, E)$, a subset $S \subseteq A$ forms a \emph{sunflower} iff $N_G(v) \cap N_G(v')$ are the same for all distinct $v, v' \in S$. 
Our lemma is stated below:
\begin{lemma}[$K_{a,b}$-free Sunflower Lemma] \label{lem:sunflower}
For any $w, \ell \in \N$,
any $K_{a, b}$-free bipartite graph $G = (A, B, E)$ such that every vertex in $A$ has degree at most $\ell$ and $|A| \geq a((w-1)\ell)^b$ has a sunflower of size $w$. Moreover, such a sunflower can be found in polynomial time.
\end{lemma}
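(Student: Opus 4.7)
My plan is to give an iterative algorithm that greedily builds up a common core $C \subseteq B$ one element at a time, either producing a sunflower of size $w$ along the way or, after $b$ rounds, extracting a $K_{a,b}$ subgraph in violation of the hypothesis. I maintain a pair $(C_t, A_t)$ where $C_t \subseteq B$ satisfies $|C_t| = t$ and $A_t = \{v \in A : C_t \subseteq N_G(v)\}$, starting at $(C_0, A_0) = (\emptyset, A)$.

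At round $t$, I greedily compute a maximal family $T \subseteq A_t$ whose reduced neighborhoods $\{N_G(v) \setminus C_t\}_{v \in T}$ are pairwise disjoint. Any such $T$ satisfies $N_G(v) \cap N_G(v') = C_t$ for all distinct $v, v' \in T$ and is therefore a sunflower of size $|T|$ with core $C_t$, so if $|T| \geq w$ I output $T$ and halt. Otherwise $|T| \leq w-1$, so the degree bound yields $|N_G(T) \setminus C_t| \leq (w-1)\ell$. By the maximality of $T$, every $v \in A_t$ with $N_G(v) \neq C_t$ must be adjacent to some element of $N_G(T) \setminus C_t$, so a pigeonhole argument produces a $u \in N_G(T) \setminus C_t$ with $|A_t \cap N_G(u)| \geq |A_t|/((w-1)\ell)$ (up to the minor correction discussed in the final paragraph). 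I then set $C_{t+1} = C_t \cup \{u\}$ and $A_{t+1} = A_t \cap N_G(u)$, which preserves the invariant.

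Unrolling this recursion, $|A_t|$ shrinks by at most a factor of $(w-1)\ell$ per round, so if $b$ rounds elapse without a sunflower being found, $|A_b| \geq |A|/((w-1)\ell)^b \geq a$. But every $v \in A_b$ has $C_b \subseteq N_G(v)$ and $|C_b| = b$, so $A_b$ together with $C_b$ realizes a $K_{a,b}$ subgraph in $G$, contradicting the hypothesis. Therefore a sunflower of size $w$ must be produced at some round $t < b$. Each round is clearly polynomial time (the maximal $T$ can be built by a linear greedy pass, and the best $u$ is found by scanning $N_G(T) \setminus C_t$), so the full procedure runs in polynomial time.

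The main technical wrinkle is the pigeonhole step: a vertex $v \in A_t$ with $N_G(v) = C_t$ exactly has empty reduced neighborhood and contributes nothing to the count. Since any collection of such vertices is itself a sunflower with core $C_t$, I handle this as a preprocessing check at each round: if $\geq w$ such vertices exist in $A_t$, output them immediately and halt; otherwise (at most $w-1$ such vertices exist) discard them and proceed. This introduces an additive slack of $O(w)$ per round in the shrinkage bound $|A_{t+1}| \geq |A_t|/((w-1)\ell)$, but these corrections form a geometric series whose total is of strictly lower order than $a((w-1)\ell)^b$ and are absorbed by the hypothesis, so the final $K_{a,b}$ contradiction still goes through.
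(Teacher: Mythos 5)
Your proof is an iterative unrolling of the paper's proof, which proceeds by induction on $b$: the paper's recursive call on the subgraph induced on $N_G(u) \cup (B \setminus \{u\})$ corresponds exactly to your transition $(C_t, A_t) \mapsto (C_t \cup \{u\}, A_t \cap N_G(u))$, and your final $K_{a,b}$ contradiction at round $b$ is the paper's base case phrased more explicitly (the paper adopts the convention that ``$K_{a,0}$-free'' means $|A| < a$, so a left set of size $\geq a$ with a size-$b$ common core is exactly your $K_{a,b}$). So the approach is the same, not a genuinely different route.

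That said, your final paragraph flags a real issue that the paper's writeup glosses over. The paper asserts that maximality of $T$ implies $N_G(x) \cap N_G(T) \neq \emptyset$ \emph{for all} $x \in A$; this fails precisely for vertices whose (reduced) neighborhood is empty, which can arise after the first level of recursion even if the original $G$ has no isolated left vertices. Your preprocessing (output such vertices as a sunflower if there are $\geq w$ of them, else discard the $\leq w-1$ stragglers) is the right idea, but the claim that the resulting additive slack ``is absorbed by the hypothesis'' is asserted, not argued, and as stated it is not tight: unrolling gives $|A_b| \geq |A|/L^b - (w-1)\sum_{j=1}^{b} L^{-j}$ with $L = (w-1)\ell$, and the error term is bounded by $(w-1)/(L-1)$, which is strictly less than $1$ only when $L \geq w$, i.e., when $\ell \geq 2$. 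In that regime integrality of $|A_b|$ and $a$ rescues the bound ($|A_b| > a - 1$ forces $|A_b| \geq a$), but for $\ell = 1$ the slack can exceed $1$ and the inequality $|A_b| \geq a$ does not follow from $|A| \geq a((w-1)\ell)^b$ alone. You should either make the integrality argument explicit and restrict to $\ell \geq 2$ (which suffices for the paper's application), or avoid the discard step by noting that the discarded vertices all have the identical neighborhood $C_t$ and hence can be re-added to any sunflower found later with core $\supseteq C_t$, which removes the slack entirely.
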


Compared to the standard bound (e.g. \cite{ER60}), the exponent here is $b$ instead of $\ell$. This improvement is crucial in our application below since we apply it for $\ell$ that is much larger than $b$.

\begin{proof}[Proof of \Cref{lem:sunflower}]
For convenience, we say that $G = (A, B, E)$ is $K_{a, 0}$-free if $|A| < a$. 

We prove the statement by induction on $b$. If $b = 0$, then this trivially holds by the above definition. Next, suppose that the statement holds for $b - 1$ for some $b \in \N$. To prove this statement for $b$, consider any $K_{a,b}$-free bipartite graph $G = (A, B, E)$ such that $|A| \geq a((w-1)\ell)^b$ and every vertex in $A$ has degree at most $\ell$. Consider any maximal set $T \subseteq A$ such that $N_G(v)$ are pairwise-disjoint for all $v \in T$. If $|T| \geq w$, then $T$ forms a sunflower of size (at least) $w$. Otherwise, if $|T| \leq w - 1$, then $|N_G(T)| \leq (w - 1)\ell$. Since $T$ is maximal, we have that $N_G(x) \cap N_G(T) \ne \emptyset$ for all $x \in A$. This means that there exists $u \in N_G(T)$ such that $N_G(u) \geq \frac{|A|}{(w - 1)\ell} \geq a((w-1)\ell)^{b - 1}$. Consider the subgraph of $G$ induced on $N_G(u) \cup (B \setminus \{u\})$. This is a $K_{a, b-1}$-free bipartite graph where $|N_G(u)| \geq a((w-1)\ell)^{b - 1}$. As such, we can apply the inductive hypothesis to conclude that there exists a sunflower $S \subseteq N_G(u)$ of size $w$ in this subgraph. Since $u$ is a common neighbor of all vertices in $S$ (w.r.t. $G$), $S$ is also a sunflower in $G$. This completes the inductive step. 

Note that this proof also yields a polynomial-time algorithm since computing a maximal set $T$ and finding $u$ can be done in polynomial time.
\end{proof}

\subsubsection{The Preprocessing Algorithm}

We next reduce the number of positive variables via a similar greedy-by-degree strategy to \cite{SkowronF17,Man19}. If the degrees of the vertices we select are all sufficiently large, then there is nothing else to be done (Case I below). However, if some vertex degrees are too small, we may need to keep other vertices (Case II below); we deal with this case using the sunflower lemma we showed above.

\begin{lemma} \label{lem:reduce-num-pos}
There is a parameter-preserving $(1 - \eps)$-APPA for $K_{a, b}$-free Max $k$-Weight SAT such that, if the input contains $\leq t$ negative variables, then the output has $\left(t + a \cdot O(b)^{2b} \cdot \frac{k^b}{\eps^{3b}}\right)$ variables.
\end{lemma}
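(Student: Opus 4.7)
By \Cref{lem:add-rem} combined with the polynomial-time $(1-1/e)$-approximation of \Cref{lem:polytime-apx}, it suffices to build a $(1, O(\eps))$-APPA. Taking the solution-lifting algorithm to be $\Iden$ and invoking \Cref{lem:var-deletion}, the task further reduces to producing $\Phi'$ by deleting some positive variables while ensuring $\opt_{\Phi', k} \geq (1 - O(\eps)) \opt_{\Phi, k}$. Order the positive variables of $\Phi$ by nonincreasing degree in $\Gin_\Phi$ as $u_1, u_2, \ldots$, set a degree threshold $D = \Theta(b^2)\cdot\eps^{-3}$ and a target count $M = a \cdot O(b)^{2b} \cdot (k/\eps^3)^b$ so that $a(kD)^b \leq M$, and split the analysis into two cases depending on whether the top positive variables are ``dense enough''.

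In Case I, the top $M$ positive variables all have degree at least a threshold chosen so that $\opt_{\Phi,k}$ is comfortably larger than $k \cdot \deg_\Phi(u_{M+1})$; the algorithm then keeps $\{u_1, \ldots, u_M\}$ and deletes every other positive variable. Correctness follows from a swap argument in the style of \cite{SkowronF17,Man19}: for any optimal $Y^*$, each positive $v \in Y^* \setminus \{u_1, \ldots, u_M\}$ satisfies $\deg_\Phi(v) \leq \deg_\Phi(u_{M+1})$ and can be exchanged for an unused top-$M$ variable at cost at most $\deg_\Phi(v)$ clauses. The $K_{a,b}$-free sparsity from \Cref{lem:kst} (applied to the top $M+1$ positive variables with all clauses) controls $\deg_\Phi(u_{M+1})$ enough for the total cost over the at most $k$ swaps to be absorbed into $O(\eps \cdot \opt_{\Phi,k})$.

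In Case II, some top-$M$ variable has degree below the threshold, which means only few positive variables have degree $\geq D$; the algorithm keeps all such high-degree vars (bounded in count by \Cref{lem:kst}) and reduces the low-degree positive variables by iteratively applying the $K_{a,b}$-free sunflower lemma. While more than $a((k+1)D)^b = O_b(k^b/\eps^{3b})$ low-degree positive variables remain, \Cref{lem:sunflower} produces a sunflower of size $k+1$ among them, and I delete one petal $v^\star$. Because $|Y^*| \leq k$, some petal of the sunflower lies outside any fixed optimum $Y^*$, and the standard swap of $v^\star$ with such a petal shows $\opt$ drops by at most $\deg_\Phi(v^\star) \leq D$ per iteration.

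The main obstacle is to show that Case II's aggregate loss across all iterations is $O(\eps \cdot \opt_{\Phi,k})$. The naive bound of $D$ per iteration times the number of iterations can be too loose, so I expect to leverage one of two amortizations: (i) since Case II applies precisely when the top degrees are small, $\opt_{\Phi,k}$ itself is no larger than roughly $kD + |\cC_{\neg}|$ by degree counting on $Y^* \cap \cV$, which makes the multiplicative $(1-\eps)$ slack sufficient; or (ii) within any sunflower the ``unique petals'' $N_{\Gin_\Phi}(v) \setminus C^\star$ are pairwise disjoint and hence have total size at most $m$, so choosing the petal with smallest unique contribution yields a per-iteration loss of only $m/(k+1)$, which aggregates correctly via a telescoping argument over the iterations. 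Combining the two cases with \Cref{lem:add-rem,lem:var-deletion} then yields the claimed $(1-\eps)$-APPA with at most $t + M$ variables.
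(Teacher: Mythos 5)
The high-level structure you propose (keep high-degree positives, sunflower-reduce the low-degree ones) matches the paper, but both of your case analyses have real gaps that the paper circumvents with different and more careful arguments.

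\textbf{Case I.} You account only for the \emph{loss} $\deg_\Phi(v)$ from discarding a vertex $v$ of $Y^*$ and never for the \emph{gain} from the replacement vertex, so your bound on $\opt_\Phi - \opt_{\Phi'}$ is $k\cdot\deg_\Phi(u_{M+1})$, and you try to push this below $\eps\cdot\opt_{\Phi,k}$ by applying \Cref{lem:kst} to the top $M+1$ positive variables against \emph{all} $m$ clauses. But this instance of \Cref{lem:kst} gives $\deg_\Phi(u_{M+1}) \leq d$ only when $M+1 \geq a(2m/d)^b$; since $m/\opt_{\Phi,k}$ can be as large as $\Theta(n)$, the required $M$ depends on $n$, not on $k$ alone. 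The paper instead applies \Cref{lem:kst} to the bipartite graph between the unused vertices of $\cV_q$ and the clauses \emph{already satisfied} by the current partial solution $Y_{i-1}$ — a right side of size at most $\opt_{\Phi,k}$, not $m$ — so that for each swap there is a replacement $\tilde{u}_i$ with low \emph{marginal} overlap with what's already covered, yielding gain $\geq\tau - d$ rather than zero. This gain-vs-loss comparison is the whole point of Case I and is absent from your argument.

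\textbf{Case II.} Your sunflowers have size $k+1$, and you correctly note that this is not enough: a swap to a petal $v' \notin Y^*$ could still lose up to $\deg_\Phi(v^\star)$ clauses, and the aggregate over many reduction-rule applications is not controlled. Your amortization (i) does not hold — Case II in your setup only says $\deg(u_M)$ is below the threshold, it does not say all of $Y^*$'s positive degrees are small, so $\opt_{\Phi,k} \leq kD + |\cC_\neg|$ is unjustified. Your amortization (ii) gives per-iteration loss $m/(k+1)$, which over $\Theta(n)$ iterations is not $O(\eps\cdot\opt_{\Phi,k})$, and the ``telescoping argument'' is not supplied. The paper's fix is to make the sunflower size \emph{adaptive}: it sets $\topt = \lceil k\tau/\eps\rceil$ and finds sunflowers of size $\topt + 1$. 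Then it splits: if $\opt_{\Phi,k} > \topt$, the crude bound $\opt_{\Phi',k} \geq \opt_{\Phi,k} - k\tau \geq (1-\eps)\opt_{\Phi,k}$ already works (the loss is dominated by the large optimum); if $\opt_{\Phi,k} \leq \topt$, the sunflower is strictly larger than the number of satisfied clauses, so one can always find a petal whose unique part avoids all satisfied clauses (and delete the lowest-degree petal), giving a \emph{lossless} swap and $\opt_{\Phi',k} = \opt_{\Phi,k}$. This adaptive threshold is the missing ingredient; with the fixed size $k+1$ the argument does not close.

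In short: you have the right skeleton — greedy-by-degree plus sunflower — but the paper's proof depends on (a) applying \Cref{lem:kst} to the already-satisfied clauses to get a per-swap gain, and (b) choosing the sunflower size as a function of $\tau$ and $\eps$ so that either the optimum is large enough to absorb the degree loss or the sunflower is bigger than the optimum and the swap is exact. Neither idea appears in your proposal, and without them the error accumulation is not controlled.
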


\begin{proof}
Let $\cV_{\neg}$ denote the set of negative variables in the input formula $\Phi$.
Let $\cV_q$ be the set of $q$ positive variables with highest degrees for $q = k + a \cdot (2bk/\eps)^{b}$. Let $\tau$ denote the minimum degree of variables in $\cV_q$. We consider two cases based on the value of $\tau$.

\paragraph{Case I: $\tau \geq \frac{2 b}{\eps}$.}
In this case, we delete all variables outside of $\cV_{\neg} \cup \cV_q$ (and all their literals). Let $\Phi' = (\cV_{\neg} \cup \cV_q, \cC')$ denote the resulting formula. $\cA$ then outputs $(\Phi', k)$.

Below, we will argue that $\opt_{\Phi', k} \geq (1 - \eps) \cdot \opt_{\Phi, k}$. Note that this, together with \Cref{lem:var-deletion}, immediately implies that $(\cA, \Iden)$ is an $(1 - \eps)$-APPA.

To see that this is the case, let $Y^*$ denote the optimal solution in $\Phi$. Suppose that $Y^* \setminus (\cV_{\neg} \cup \cV_q) = \{u_1, \dots, u_p\}$. Consider the following iterative procedure:
\begin{itemize}
\item We start with $Y_0 \gets (Y^* \cap (\cV_{\neg} \cup \cV_q))$.
\item For $i = 1, \dots, p$:
\begin{itemize}
\item Pick $u^*_i = \argmax_{u \in \cV_q \setminus Y_{i-1}} \val_{\Phi}(Y_{i-1} \cup \{u\})$. (tie broken arbitrarily).
\item Let $Y_i \gets Y_{i - 1} \cup \{u^*_i\}$.
\end{itemize}
\end{itemize}
To compare $\val_{\Phi}(Y_p)$ and $\val_{\Phi}(Y^*)$, let us fix $i \in [p]$. First, since $u_i \notin \cV_{\neg} \cup \cV_q$, we have
\begin{align} \label{eq:opt-upper}
\val_\Phi(Y_0\cup \{u_1, \dots, u_i\}) - \val_\Phi(Y_0 \cup \{u_1, \dots, u_{i-1}\}) \leq \deg_{\Phi}(u_i) \leq \tau.
\end{align}
Next, let $\cC^{\SAT(Y_{i-1})}$ denote the multiset of clauses satisfied by $Y_{i - 1}$. Consider the subgraph of $\Gin$ induced on $\cV_q \setminus Y_{i - 1}$ on one side and $\cC^{\SAT(Y_{i-1})}$ on the other. By our choice of $q$, we have $\left|\cV_q \setminus Y_{i - 1}\right| \geq q - k \geq a \cdot (2k/\eps)^b$. Meanwhile, we also have  $|\cC^{\SAT(Y_{i-1})}| \leq \opt_{\Phi, k}$. Thus, we may apply \Cref{lem:kst} with $n_L = a \cdot (2k/\eps)^b, n_R = \opt_{\Phi, k}, d = \max\left\{\eps\tau, \frac{\eps}{k} \cdot \opt_{\Phi, k}\right\}$ to conclude that there exists $\tu_i \in \cV_q \setminus Y_{i-1}$ such that $|N_{\Gin_\Phi}(\tu_i) \cap \cC^{\SAT(Y_{i-1})}| \leq d$. This means that
\begin{align*}
\val_\Phi(Y_{i-1} \cup \{\tu_i\}) \geq \val_{\Phi}(Y_{i-1}) + \deg_{\Phi}(\tu_i) - d
\geq \val_{\Phi}(Y_{i-1}) + \tau - d,
\end{align*}
where the second inequality is from $\tu_i \in \cV_q$. Moreover, by our choice of $u^*_i$, we have 
\begin{align} \label{eq:constructed-lb}
\val_{\Phi}(Y_i) \geq \val_\Phi(Y_{i-1} \cup \{\tu_i\}) \geq \val_{\Phi}(Y_{i-1}) + \tau - d.
\end{align} 
By summing \Cref{eq:opt-upper} over all $i \in [p]$, we have
\begin{align*}
\opt_{\Phi, k} = \val_{\Phi}(Y^*) \leq \val_{\Phi}(Y_0) + p \cdot \tau.
\end{align*}
Moreover, by summing \Cref{eq:constructed-lb} over all $i \in [p]$ and then using the above inequality, we have
\begin{align*}
\val_{\Phi, k}(Y_p) &\geq \val_{\Phi, k}(Y_0) + p \cdot \tau - p \cdot d \\
&= \val_{\Phi, k}(Y_0) + p \cdot \tau - p \cdot \max\left\{\eps\tau, \frac{\eps}{k} \cdot \opt_{\Phi, k}\right\} \\
&\geq  \max\left\{\val_{\Phi, k}(Y_0) + (1 - \eps)p \cdot \tau, \left(1 - p \cdot \frac{\eps}{k}\right) \cdot \opt_{\Phi, k}\right\} \\
&\geq (1 - \eps) \cdot \opt_{\Phi, k}.
\end{align*}
This implies that $\opt_{\Phi', k} \geq \val_{\Phi, k}(Y_p) \geq (1 - \eps) \cdot \opt_{\Phi, k}$ as desired.

\paragraph{Case II: $\tau < \frac{2b}{\eps}$.} Let $\topt = \lceil \frac{k\tau}{\eps} \rceil$ and we instead use the following reduction algorithm:
\begin{itemize}
\item Start with the input formula $\Phi = (\cV, \cC)$.
\item Applying the following reduction rule until it cannot be applied:
\begin{itemize}
\item Let $\cV_{\deg \leq \tau}$ denote the set of positive vertices with degree at most $\tau$.
\item Use \Cref{lem:sunflower} on the subgraph of $\Gin$ induced on $\cV_{\deg \leq \tau} \cup \cC$.
\item If a sunflower of size $\topt + 1$ is found, then delete the variable with the lowest degree in the sunflower (tie broken arbitrarily) together with all its literals.
\end{itemize}
\item Let the final formula be $\Phi' = (\cV', \cC')$.
\end{itemize}
By \Cref{lem:sunflower}, there will be at most $a \cdot (\topt \cdot \tau)^b$ variables from $\cV_{\deg \leq \tau}$ left in $\cV'$. Thus, we have
\begin{align*}
|\cV'| 
\leq |\cV_{\neg}| + |\cV_{q}| + a \cdot (\topt \cdot \tau)^b &\leq t + a \cdot O(b)^{2b} \cdot \frac{k^b}{\eps^{3b}}.
\end{align*}

To show that this is an $(1 - \eps)$-APPA, we consider further two subcases:
\begin{itemize}
\item \textbf{Case II.A: } $\opt_{\Phi, k} > \topt$. In this case, let $Y^*$ denote the optimal solution in $\Phi$. We have
\begin{align*}
\opt_{\Phi', k} \geq 
\val_{\Phi}(Y^* \cap \cV') \geq \val_{\Phi}(Y^*) - \sum_{v \in (Y^* \setminus \cV')} \deg_{\Phi}(v) \geq \opt_{\Phi, k} - k \cdot \tau \geq (1 - \eps)\opt_{\Phi, k},
\end{align*}
where the third inequality follows from the fact that we only delete vertices with degree at most $\tau$. From the above inequality and \Cref{lem:var-deletion}, this is an $(1 - \eps)$-APPA as desired. 
\item \textbf{Case II.B: } $\opt_{\Phi, k} \leq \topt$. In this case, we argue that an application of the reduction rule does not change the optimum. To see this, suppose that we delete a vertex $v$ in a sunflower $T$ of size $\topt+1$. Either $v$ is not in the current optimal solution $Y^*$, or $v$ is in $Y^*$. In the former case, removing $v$ clearly does not change the optimum. In the latter case, since at most $\opt_{\Phi, k}$ clauses are satisfied and $|T| = \topt + 1 \geq \opt_{\Phi, k} + 1$, we can find another vertex $v' \in T \setminus \{v\}$ such that $N_G(v') \setminus N_G(v)$ does not have any clause that is satisfied by $Y^*$. As such, by replacing $v$ by $v'$ in $Y^*$, we have a solution with no less value than before. 

Thus, we have $\opt_{\Phi', k} = \opt_{\Phi, k}$ which, together with \Cref{lem:var-deletion}, implies that the reduction algorithm is also an $(1-\eps)$-APPA in this case. \qedhere
\end{itemize}
\end{proof}

\subsection{Step III: Reducing \# Clauses}

Finally, we reduce the number of clauses using a ``scaling and rounding of weights'' procedure, which is a standard technique in weighted-vs-unweighted reductions (see e.g.~\cite{CST01}).

\begin{lemma} \label{lem:clause-rep-dec}
There is a parameter-preserving $(1 - \eps)$-APPA for $K_{a, b}$-free Max $k$-Weight SAT such that the output formula has the same set of variables and $O\left(b \cdot (2n)^{a + 1} / \eps\right)$ clauses.
\end{lemma}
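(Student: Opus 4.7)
The plan is to apply the scaling-and-rounding reduction from~\cite{CST01}, after bounding the number of distinct clauses via the $K_{a,b}$-free assumption.

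First, let $S_C := N_{\Gin_\Phi}(C) \subseteq \cV$ denote the variable set of clause $C$, and let $m_D$ be the multiplicity of each distinct clause $D$ in $\cC$. By $K_{a,b}$-freeness, at most $b-1$ multiset copies of clauses $C \in \cC$ contain any fixed $a$-subset $T \subseteq \cV$ within $S_C$, so double-counting pairs $(T, C)$ with $T \subseteq S_C$ and $|T| = a$ yields $\sum_{C \in \cC : |S_C| \geq a} \binom{|S_C|}{a} \leq (b-1)\binom{n}{a}$. Hence the multiset count of clauses with $|S_C| \geq a$ is $O(bn^a)$. The distinct count of ``short'' clauses (with $|S_C| < a$) is at most $\sum_{i=0}^{a-1}\binom{n}{i} 2^i = O((2n)^{a-1})$, since each such clause is determined by its literal set. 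In total, the number of distinct nonempty clauses is $N = O(b(2n)^a)$.

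Second, I would compute $\hat M$ via \Cref{lem:polytime-apx} so that $\hat M \leq \opt_{\Phi, k} \leq \hat M \cdot e/(e-1)$, then scale. Specifically, pick a scaling factor $s := \Theta(\eps \hat M / n)$ and replace each distinct clause $D$'s multiplicity $m_D$ by $\lfloor m_D/s \rfloor$ (deleting clauses whose new multiplicity is $0$ or whose literal set is empty). Every nonempty $D$ satisfies $m_D \leq \opt_{\Phi, k}$, because $D$ alone is satisfied by some weight-$\leq 1$ solution (either $Y = \emptyset$ if $D$ has a negative literal, or $Y = \{v\}$ for any positive literal $v \in D$), so $\lfloor m_D/s \rfloor \leq \opt_{\Phi,k}/s = O(n/\eps)$ per distinct clause. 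Hence the total output clause count is $N \cdot O(n/\eps) = O(b(2n)^{a+1}/\eps)$, as claimed.

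Third, I would verify the APPA property via \Cref{lem:clause-modif} with $h = 0$: for any $Y$, $|\val_\Phi(Y) - s \val_{\Phi'}(Y)| = \sum_D (m_D \bmod s) \, \ind[D\text{ satisfied by }Y] \leq s \cdot \sigma(Y)$, where $\sigma(Y)$ counts the distinct clauses satisfied by $Y$. Combined with \Cref{lem:add-rem} and \Cref{lem:polytime-apx}, this yields a $(1 - \eps)$-APPA after rescaling $\eps$. The main obstacle will be controlling $s \cdot \sigma(Y)$ uniformly over $Y$, since the crude bound $\sigma(Y) \leq N$ gives $sN = \Theta(\eps \hat M \cdot N / n)$, which can exceed $\eps \hat M$ for $a \geq 2$. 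I expect the cleanest resolution is to split the reduction into a pass that leaves ``long'' clauses (with $|S_C| \geq a$) untouched---since their multiset count is already $O(bn^a)$ by the counting above---and a separate pass that scales ``short'' clauses with a finer factor calibrated to their distinct count $O((2n)^{a-1})$, composing the two via \Cref{lem:add-rem}.
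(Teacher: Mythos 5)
Your overall plan---scale and round multiplicities, bound the number of distinct clauses via $K_{a,b}$-freeness, and compose \Cref{lem:clause-modif} with \Cref{lem:add-rem}---is precisely the paper's approach, and you are right to flag the problem with your own calibration: with $s = \Theta(\eps\hat{M}/n)$ the error term $s\cdot|\cCdist|$ overshoots $\eps\cdot\opt_{\Phi,k}$ as soon as $a \geq 2$. But the two-pass split you sketch is both unnecessary and, on its own, insufficient. The paper simply takes $s := \frac{\eps\hat{M}}{10b(2n)^a} = \Theta\left(\eps\hat{M}/|\cCdist|\right)$, so that $s\cdot|\cCdist| = O(\eps\cdot\opt_{\Phi,k})$ holds directly in a single pass.

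The ingredient you are missing---and it is needed no matter how $s$ is calibrated or how many passes you use---is a bound on the \emph{multiset} size $|\cC|$, not just on the per-clause multiplicities $m_D$. Observe that every variable $v$ satisfies $\deg_\Phi(v) \leq 2\opt_{\Phi,k}$: the clauses containing the literal $v$ are all satisfied by the feasible solution $\{v\}$, and those containing $\neg v$ are all satisfied by $\emptyset$, so each count is at most $\opt_{\Phi,k}$. Summing over $v$ gives $|\cC| \leq 2n\cdot\opt_{\Phi,k}$, and hence $|\cC'| \leq |\cC|/s = O\left(b(2n)^{a+1}/\eps\right)$ with the paper's $s$. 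By contrast, your bound $\lfloor m_D/s\rfloor \leq \opt_{\Phi,k}/s$ per distinct clause, multiplied by $N = |\cCdist|$ distinct clauses, yields only $|\cC'| = O(N^2/\eps) = O(b^2(2n)^{2a}/\eps)$ once $s$ is corrected to $\Theta(\eps\hat{M}/N)$, which already overshoots the target for $a \geq 2$; the same quadratic-in-$N$ loss recurs inside the short-clause pass of your proposed split for larger $a$. So the split alone does not recover the stated bound---the global bound $|\cC| \leq 2n\cdot\opt_{\Phi,k}$ is what makes the argument close, and with it a single scaling pass suffices.
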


\begin{proof}
From \Cref{lem:add-rem} and \Cref{lem:polytime-apx}, it suffices to give an $(1, \eps)$-APPA with the claimed property.

On input $(\Phi, k)$, the reduction algorithm works as follows.
\begin{itemize}
\item Use \Cref{lem:polytime-apx} to compute $\topt$ s.t. $\opt_{\Phi, k} \geq \topt \geq \left(1 - \frac{1}{e}\right)\cdot \opt_{\Phi, k}$. Let $s := \frac{\eps \cdot \topt}{10 b \cdot (2n)^a}$. 
\item Let $\cCdist$ denote the set of distinct clauses in $\cC$.
\item Start with $\cC'$ being the empty multiset. For each $C \in \cCdist$, let $m_C$ denote the number of occurrences of $C$ in $\cC$ and add $\lfloor m_C / s \rfloor$ copies of $C$ to $\cC'$.
\item Output $(\Phi' = (\cV, \cC'), k)$.
\end{itemize}
To bound $|\cC'|$, notice that every variable $v \in \cV$ satisfies $\deg_{\Phi}(v) \leq 2 \opt_{\Phi, k}$; otherwise, setting $v$ to true or false alone would already satisfy more than $\opt_{\Phi, k}$ clauses. As a result, we have $|\cC| \leq 2n \cdot \opt_{\Phi, k}$. By our definition of $\cC'$, we thus have $|\cC'| \leq \frac{|\cC|}{s} \leq O(b \cdot (2n)^{a + 1} / \eps)$.

We claim that, for every solution $Y \in \binom{\cV}{\leq k}$, we have $\left|\val_{\Phi}(Y) - s \cdot \val_{\Phi'}(Y)\right| \leq \frac{\eps}{2} \cdot \opt_{\Phi, k}$. From this and \Cref{lem:clause-modif}, we can conclude that $(\cA, \Iden)$ forms an $(1, \eps)$-APPA as desired.

To see that the claim holds, note that
\begin{align*}
\left|\val_{\Phi}(Y) - s \cdot \val_{\Phi'}(Y)\right| \leq \sum_{C \in \cCdist} |m_C - s \cdot \lceil m_C / s\rceil|\leq s \cdot |\cCdist|.
\end{align*}
Now, since $\Phi$ is $K_{a, b}$-free, any set of $a$ variables can occur together in at most $b$ clauses. Thus, the number of clauses with at least $a$ literals is at most $b \cdot n^a$. Meanwhile, the number of \emph{unique} clauses with less than $a$ literals is at most $(2n)^{a}$. Plugging this into the above, we have
\begin{align*}
\left|\val_{\Phi}(Y) - s \cdot \val_{\Phi'}(Y)\right| \leq s \cdot (b \cdot n^a + (2n)^a) \leq \frac{\eps}{2} \cdot \opt_{\Phi, k},
\end{align*}
where the inequality is due to our choice of $s$. 
\end{proof}

\subsection{Putting Things Together: Proof of \Cref{thm:main-kernel}}
\label{sec:main-proof}

\begin{proof}[Proof of \Cref{thm:main-kernel}]
On input $(\Phi, k)$, the reduction algorithm works as follows:
\begin{enumerate}
\item Delete all clauses with at least $k + 1$ negative literals.
\item Apply $(1 - \eps/3)$-APPA reduction from \Cref{lem:ignore-positive-instances}.
\item Apply $(1 - \eps/3)$-APPA reduction from \Cref{lem:reduce-num-pos}.
\item Apply $(1 - \eps/3)$-APPA reduction from \Cref{lem:clause-rep-dec}.
\end{enumerate}
All clauses deleted in the first step are always true in any solution; therefore, \eqref{eq:clause-preservation} is satisfied $s = 1, h = $ \# deleted clauses and $\delta = 0$. Thus, by \Cref{lem:clause-modif} the first step (together with identity solution lift) is a 1-APPA. By our construction, the remaining steps are $(1 - \eps/3)$-APPA. Thus, the entire algorithm is an $(1 - \eps/3)^3 \geq (1 - \eps)$-APPA as desired.

As for the size, the second step ensures that there are $O\left(\frac{k \log k}{\eps}\right)$ negative variables left. The third step then ensures that the total number of variables is $n' = O\left(\frac{k \log k}{\eps}\right) + a \cdot O(b)^{2b} \cdot \frac{k^b}{\eps^{3b}}$. The last step then guarantees that the number of clauses is $O(b\cdot (2n')^{a+1}/\eps) \leq (k/\eps)^{O(ab)}$.
\end{proof}

\section{Discussion and Open Questions}
\label{sec:open}

In this work, we give an approximate kernel for Max $k$-Weight SAT based on (relatively) simple greedy strategies together with a sunflower lemma-based reduction rule. We remark that, although we assume that $\cC$ is a \emph{multiset}, we can also produce an instance for the \emph{set} version as follows: first, replicate each clause in the output instance $\lceil k/\eps \rceil$ times. Then, for every clause in the resulting instance, create a fresh new variable and add it to that clause. It is not hard to see that this reduction procedure is an $(1 - \eps)$-APPA and the final instance has no duplicated clauses. The size of the kernel remains $(k/\eps)^{O(ab)}$ after this transformation. 

Another interesting observation is that our APPA for reducing the number of negative variables (\Cref{lem:ignore-positive-instances}) does not require the $K_{a, b}$-free assumption on the incidence graph. Thus, it is applicable beyond the context of this work, e.g. for other graph classes or for restricted classes of CSPs.

A clear open question from our work is whether we can improve the size of the kernel further. In particular, is the exponent $b$ on the number of variables in \Cref{thm:main-kernel} necessary? Similarly, we can also ask whether the running time can be improved, although the gap here is smaller. Namely, can we remove the $\log k$ dependency in the exponent in \Cref{cor:alg}?

Another interesting direction is to consider other types of constraints beyond cardinality constraints. For example, Sellier~\cite{Sellier23} gave an approximate kernel and FPT-AS for Max $k$-Coverage with bounded frequency under matroid constraints, i.e. the solution $Y$ has to be an independent set of a given matroid. It is interesting whether we can relax the bounded frequency assumption to $K_{a, b}$-freeness similar to what~\cite{JKPSSSU23} and we have done for cardinality constraints.

\bibliographystyle{alpha}
\bibliography{ref}

\appendix

\section{Proof of \Cref{lem:kst}}

\begin{proof}[Proof of \Cref{lem:kst}]
Suppose for the sake of contradiction that there exists a $K_{a, b}$-free bipartite graph with $n_L$ left vertices and $n_R$ right vertices such that every vertex on the left has degree at least $d + 1$. The number of $K_{1, b}$ subgraph in this graph is at least $n_L \cdot \binom{d+1}{b}$. By pigeon-hole principle, this means that at least $\left\lceil \frac{n_L \cdot \binom{d+1}{b}}{\binom{n_R}{b}} \right\rceil$ such subgraphs shares the same set of $b$ vertices on the right. Meanwhile, from our assumptions on parameters, we have
\begin{align*}
\frac{n_L \cdot \binom{d+1}{b}}{\binom{n_R}{b}} \geq n_L \cdot \left(\frac{d+2-b}{n_R+1-b}\right)^b \geq n_L \cdot \left(\frac{d/2}{n_R}\right)^b \geq a,
\end{align*}
which contradicts with the assumption that the graph is $K_{a, b}$-free.
\end{proof}

\end{document}